\DeclareMathOperator{\conv}{conv}
\newtheorem{theorem}{Theorem}
\newtheorem{cor}[theorem]{Corollary}
\newtheorem{lemma}[theorem]{Lemma}
\newcommand{\ket}[1]{|#1\rangle}
\newcommand{\op}[2]{|#1\rangle \langle #2|}
\newcommand{\tr}{{\rm tr}}
\renewcommand{\>}{\rangle}
\newcommand{\<}{\langle}
\def\X{\mathcal{X}}
\def\Y{\mathcal{Y}}
\def\L{\mathcal{L}}
\def\P{\mathcal{P}}
\newlength{\blank}
\newenvironment{proof-of}[1][{\hspace{-\blank}}]{{\medskip\noindent\textit{Proof~{#1}.\ }}}{\hfill\qedsymbol}
\begin{document}

\title{\bf LOCC protocols with bounded width per round \protect\\ optimize convex functions}
\author{
  Debbie Leung\footnote{%
    Institute for Quantum Computing and
    Department of Combinatorics and Optimization,
    University of Waterloo,
    Waterloo, Ontario, Canada.}
  \and
  Andreas Winter\footnote{ICREA \& F\'{i}sica Te\'{o}rica: Informaci\'{o} i Fen\'{o}mens Qu\`{a}ntics, Departament de F\'{i}sica, Universitat Aut\`{o}noma de Barcelona, ES-08193 Bellaterra (Barcelona), Spain}
  \and
  Nengkun Yu$^{*,}$\footnote{Centre for Quantum Software and Information,
   Faculty of Engineering and Information Technology, University of
   Technology Sydney NSW 2007, Australia.}
}

\date{20 April 2019}

\maketitle

\begin{abstract}
We start with the task of discriminating finitely many multipartite
quantum states using LOCC protocols, with the goal to optimize
the probability of correctly identifying the state.
We provide two different methods to show that finitely many
measurement outcomes in every step are sufficient for
approaching the optimal probability of discrimination.
In the first method, each measurement of an optimal LOCC protocol,
applied to a $d_{\rm loc}$-dim local system, is replaced by one with
at most $2d_{\rm loc}^2$ outcomes, without changing the probability of
success.
In the second method, we decompose \emph{any} LOCC protocol into a
convex combination of a number of ``slim protocols'' in which each
measurement applied to a $d_{\rm loc}$-dim local system has at most
$d_{\rm loc}^2$ outcomes.
To maximize any convex functions in LOCC (including 
the probability of state discrimination or fidelity of state
transformation), an optimal protocol can be replaced by the best slim
protocol in the convex decomposition without using shared randomness.
For either method, the bound on the number of outcomes per measurement
is independent of the global dimension, the number of parties, the
depth of the protocol, how deep the measurement is located, and
applies to LOCC protocols with infinite rounds,
and the ``measurement compression'' can be done ``top-down'' -- 
independent of later operations in the LOCC protocol.
The second method can be generalized to implement LOCC instruments
with finitely many outcomes: if the instrument has $n$ coarse-grained 
final measurement outcomes,
global input dimension $D_0$ and global output
dimension $D_i$ for $i=1,\cdots,n$ conditioned on the $i$-th outcome,
then one can obtain the instrument 
as a convex combination of no more than $R=\sum_{i=1}^n D_0^2
D_i^2 - D_0^2 + 1$ slim protocols; in other words, $\log_2 R$ bits of shared
randomess suffice. 
\end{abstract}

\clearpage

\section{Introduction}
\label{intro}

For a multi-partite quantum system, the class of operations that can
be implemented by composing local operations on each individual part
and classical communication between the parts is shorthanded LOCC.
This class originates from the seminal work by Peres and Wootters
\cite{PW91} and its
importance has been manifest in many subsequent results, such as
\cite{teleportation,BDSW96}.
One motivation for the LOCC class is the operational difficulty of
long range quantum communication.  From a more fundamental
perspective, LOCC is precisely the class of operations that can be
implemented \emph{without} shared entanglement; therefore LOCC provides a
natural framework to study quantum nonlocality and entanglement.
Understanding the power and the limitation of LOCC operations
is one of the main goals of quantum information theory.
In particular, we say that an information processing task exhibits
nonlocality when it can be accomplished using global operations but
not by LOCC operations.
While the class of LOCC operations is well motivated, it does not
have a succinct mathematical characterization, and the complexity
grows rapidly with the number of rounds of communication.

In this paper, we first consider the quantum state discrimination problem,
in which a list of quantum states is fixed in advance.  A referee
chooses a state from the list, prepares a copy, and distributes it to
the discriminating party (or parties), whose goal is to identify which
state has been prepared by the referee.
In some situations the prepared state can be identified without error.
Otherwise, one can relax the problem by assuming that the referee
picks a state from the list according to some pre-determined
distribution known to the parties, and their goal is to maximize the
\emph{success probability} (i.e., the probability of correctly
identifying the state).

The quantum state discrimination problem provides a fruitful line of
studies in our understanding of LOCC and nonlocality.  If restricting
the players to LOCC operations strictly decreases their probability of
success, the problem exhibits nonlocality.
A partial list of references on this problem can be found in
%
\cite{MP95}-\cite{Coh15}. 
Because there is no succinct description of all possible LOCC
discrimination strategies, one widely used approach is to use a larger 
set of operations to study the limitation on the distinguishability
power of LOCC, for instance, separable operations (SEP) or PPT-preserving
operations
\cite{QDH01,QDH02,EW02,MW08,MWW09,YDY11b,YDY14,BCJ+14}.

More recent studies have found useful structural and topological
properties of LOCC 
\cite{Coh15}-\cite{Coh17}.  In \cite{Chi11,CLL12,CLM+14}, the set of LOCC
operations is shown to be not closed. Explicit entanglement
transformation tasks are given in \cite{Chi11,CLL12,CLM+14} that are
\emph{provably} not accomplished by any finite round LOCC protocol but
that can be approximated with arbitrary precision when the number of
communication rounds increases. Consequently, we cannot assume that
an LOCC protocol has a finite number of communication rounds, called
the \emph{depth} of the protocol.  Even after restricting to finite depth
protocols for the task, it is not clear apriori whether more and more
outcomes in some intermediate measurement in the protocol can
approximate the ideal task better and better.  
The total width of the protocol refers to the total number of
measurement outcomes in the protocol.  A related concept is the width
per measurement, which is the number of outcomes for each measurement.
An LOCC protocol can be represented by a decision
tree where vertices represent operations, and edges represent
measurement outcomes, thus the names depth and total width of a protocol.
(Note that the width per measurement is the degree of the root vertex,
or the degree minus $1$ for other vertices.)
Reference \cite{CLM+14} shows that the set of all LOCC protocols with
a constant number of rounds of communication is \emph{compact}, and
provides an upper bound on the number of measurement outcomes.

In this paper, we focus on the width of LOCC protocols.  
In the more specialized context of multipartite quantum state
discrimination, we show that LOCC protocols with finite width per
measurement are sufficient to achieve the optimal probability of
success in quantum state discrimination under LOCC with two
different methods.
The second method extends to optimizing any convex function,
including the probability of state discrimination, and the fidelity of
state transformation.
Both results apply to LOCC (as defined in Section 2.2 of
\cite{CLM+14}).  Informally, this class includes infinite round LOCC
protocols that can be approximated better and better by adding more
and more rounds of communications (without changing the earlier
steps).  Each protocol in this class can be represented by a tree 
that can be infinite.
Each measurement is replaced by one with few outcomes in a
``top-down'' manner -- starting from the root (where the protocol
begins), we replace each measurement as we move down the tree (as the
protocol progresses) in a way \emph{independent} of how deep the protocol
will be executed.  The original task can be approximated better and
better by going deeper in the resulting single finite-width
infinite-depth protocol.

Our first method converts every measurement in the protocol (possibly
with infinitely many outcomes) into one with no more than 
$2d_{\rm loc}^2$ outcomes where $d_{\rm loc}$ is the dimension of the local
system being measured.  If the protocol is finite with $\ell$ rounds
of communication and $d$ is the largest of the local dimensions, the
total width of the protocol is upper bounded by $2^\ell d^{2\ell}$.
Our second method converts every measurement into one with no more
than $d_{\rm loc}^2$ outcomes.  If the protocol is finite with $\ell$
rounds of communication and $d$ is the largest of the local
dimensions, the total width of the protocol is upper bounded by
$d^{2\ell}$.

Both methods are constructive, and rely on Caratheodory's theorem.
They are simpler than the compression given by \cite{CLM+14} for
finite LOCC protocols, and our bounds are tighter (independent of the
global dimension and the number of parties, independent of how deep
the protocol has run, and has lower degree in the dimension). Most 
importantly, our compression is top-down.

The second method also implements any LOCC instrument with the
aforementioned width per measurement by using additional shared
randomness.  If the protocol has $n$ coarse-grained final measurement
outcomes, global input dimension $D_0$ and global output dimension
$D_i$ for $i=1,\cdots,n$ conditioned on the $i$-th outcome, then
$\log_2 R$ bits of shared randomess suffice where $R=\sum_{i=1}^n
D_0^2 D_i^2 - D_0^2 + 1$.

Towards the final stages of preparing this manuscript, we learnt of
related results by Cohen \cite{Coh19}, who shows that any LOCC quantum
operation ${\cal E}$ with potentially unbounded width can be converted
to one with finite width per round.
The number of measurement outcomes per round is upper bounded by
$\min(\kappa^2,\kappa^2 + d_{\rm loc}^2 - \chi)$ where $\kappa$ is the
global Kraus rank of ${\cal E}$ and $\chi = \dim \left( {\rm span}
\{K_i^\dagger K_j\}_{i,j=1}^\kappa \right)$ where $K_i$'s are the Kraus
operators of ${\cal E}$.
Cohen's method preserves the quantum operations.
In comparison, our methods for optimizing concave functions need not
preserve the quantum operations, but may have a tighter bound on the
width per measurement in some regime.  (For example, to discriminate
many states shared by a large number of parties, each holding a small
dimensional system, the Kraus rank scales as the number of states
which is much larger than the local dimension.)
Our second method can be extended to preserve the quantum operations
by using finite amount of shared randomness.
Our correspondence with Cohen had inspired improvements in aspects of 
our second method (including a discussion on the shared randomness,
and the application of state transformation).  
It remains unclear how closely the approaches in these two papers are
related, and whether the techniques can be combined to obtain better
results.

In Section \ref{primer} we cover the mathematical background and
define notations and concepts required for the discussion.
The main results are presented in Section \ref{sec:result}.

\section{Preliminaries}
\label{primer}


The term {\it Hilbert space} here refers to any \emph{finite}
dimensional semidefinite inner product space over the complex numbers.
Let $\X$ be an arbitrary Hilbert space. A \emph{pure} quantum state of
$\X$ is a \emph{normalized} vector $\ket{\Psi}\in \X$.  A quantum
mechanical system is associated with a {\it Hilbert space}
and we refer to both the system and the space with the same notation.
A composite system is associated with the tensor product of the
{\it Hilbert spaces} associated with the parts.

The space of linear operators mapping $\X$ to $\Y$ is denoted by
$\mathcal{L}(\X,\Y)$, while $\mathcal{L}(\X)$ is the shorthand for
$\mathcal{L}(\X,\X)$. We use $I_{\X}$ to denote the identity operator
on $\X$, and often omit the system label $\X$.  The adjoint (or
Hermitian transpose) of $A\in\mathcal{L}(\X,\X)$ is denoted by
$A^{\dag}$.  The notation $A\geq 0$ means that $A$ is positive
semidefinite, and more generally $A\geq B$ means that $A - B$ is
positive semidefinite.  The positive square root of $A^{\dag}A$ is
denoted by $|A|=\sqrt{A^\dag A}$.

A (general) quantum state is specified by its density operator
$\rho\in \mathcal{L}(\X)$, which is a positive semi-definite operator
with trace one.  The density operator of a pure state $\ket{\psi}$ is
simply the projector $\psi:=\op{\psi}{\psi}$.

A quantum measurement $\mathcal{M}$ with input system $\mathcal{\X}$
and output system $\mathcal{\Y}$ is specified by a POVM $(A_1^\dagger
A_1, A_2^\dagger A_2, \cdots)$ where each $A_i \in
\L(\mathcal{\X},\mathcal{\Y})$ and $\sum_i A_i^\dagger A_i = I$.
If the initial state being measured is $\rho$,
$$\mathcal{M}(\rho) = \sum_i A_i \rho A_i^\dagger \otimes \op{i}{i}$$
where $i$ is the measurement outcome, and $A_i \rho A_i^\dagger$ is
the corresponding unnormalized postmeasurement quantum state whose
norm $\tr A_i \rho A_i^\dagger = A_i^\dagger A_i \rho$ gives the
probability of obtaining outcome $i$.
More generally, each $A_i$ can take the input system $\mathcal{\X}$
to an output system $\mathcal{\Y}_i$, where the $\mathcal{\Y}_i$'s
may not have the same dimension.  

The most general quantum operation ${\cal E}$ with input system
$\mathcal{\X}$ and output system $\mathcal{\Y}$ acts as ${\cal
  E}(\rho) = \sum_i A_i \rho A_i^\dagger$, where $\sum_i A_i^\dagger
A_i = I$.  An instrument with input system $\mathcal{\X}$ acts as
${\cal I}(\rho) = \sum_i {\cal I}_i (\rho) \otimes |i\>\<i|$ where
each ${\cal I}_i$ is a completely positive map, and $\sum_i {\cal
  I}_i$ is trace preserving.  A measurement is a fine-grained
instrument in which all CP maps has Kraus rank 1.  

Consider an ensemble of quantum states
$$ S=\{p_1\rho_1,\cdots,p_n\rho_n\}\subset \L(\mathcal{\X}) $$
where $\rho_k$ are normalized states and $p_k\geq 0$, $\sum_k p_k\leq 1$.
Then $\sum_k p_k$ is called the probability of the ensemble $S$.
If $\sum_k p_k=1$, the ensemble is called \emph{normalized}.

Throughout this paper, we focus on multipartite quantum systems of the form,
$$\mathcal{\X}=\mathcal{\X}_1\otimes\mathcal{\X}_2\otimes\cdots\otimes\mathcal{\X}_m.$$
LOCC, or local operations and classical communication, on this system
$\mathcal{X}$, is the set of operations such that each party is
restricted to performing quantum operations on their individual local
systems and they may communicate classical information (w.l.o.g.,
measurement outcomes) to the other parties.  Any LOCC operation can be
decomposed into rounds; at each round, one party applies a quantum
operation on his/her local system and broadcasts a classical message
to all other parties.  An LOCC protocol can have infinitely many
rounds of communication.  See \cite{CLM+14} for detail.

For any LOCC protocol $\P$ on $\mathcal{X}$ (potentially infinitely
wide and with infinitely many rounds, but one that can be approximated
by adding rounds and can be represented by a tree), denote its
$\ell$-round prefix by $\P_\ell$.  For any ensemble
$S=\{p_1\rho_1,\cdots,p_n\rho_n\}\subset \mathcal{\X}$, the
probability of successful discrimination by $\P$ can be defined as
follows,
$$P(\P,S)=\lim_{\ell \rightarrow \infty} P(\P_\ell,S),$$ where $P(\P_\ell,S)$
denotes the probability of successful discrimination of $S$ by $\P_\ell$.
As $\P_\ell$ can potentially have infinite width, $P(\P_\ell,S)$ is
similarly defined as a limit.

In our second method, we use the following notation and
terminology derived from \cite{CLM+14}.  Any protocol in LOCC can
be represented as a possibly infinite tree.  The protocol starts at
the root and moves through the tree along edges, always further away
from the root.  Each vertex $v$ is associated with an instrument 
applied to a local system ${\cal H}_{\rm loc}$ held by one party.  
We can write this instrument as  
$${\cal L}_v(\rho) = \sum_{w:{\rm child~of}~v} 
{\cal L}_{(w,v)}(\rho) \otimes |(w,v)\> \< (w,v)|,$$ 
where each outgoing edge $(w,v)$ 
is associated with a CP map ${\cal L}_{(w,v)}$ acting on
${\cal H}_{\rm loc}$, such that $\sum_{w:{\rm child~of}~v} {\cal L}_{(w,v)}$ 
is trace-preserving.  
Each vertex $v$ at depth $\ell$ can be reached by a unique path
from the root $r$, $(r,v_1), (v_1,v_2), \cdots, (v_{\ell-1},v)$
and the vertex is associated with a ``cumulative'' CP map 
$${\cal N}_v := {\cal L}_{(v,v_{\ell-1})} \circ \cdots
\circ {\cal L}_{(v_2,v_1)} \circ {\cal L}_{(v_1,r)}.$$

The LOCC protocol implements an LOCC instrument ${\cal L}$ which can
be specified as follows.  Consider any function $f{:}L \rightarrow O$
from the set of leaves $L$ of the tree, to a set of outcomes $O$.  The
instrument ${\cal L}$ is implemented by running the protocol from the
root until arriving at a leaf $v$ and then outputting $f(v)$.  For
each $o \in O$, let ${\cal I}_o = \sum_{v\in f^{-1}(o)} {\cal N}_v$.
The instrument implemented by the LOCC protocol is given by ${\cal
  I}(\rho) = \sum_o {\cal I}_o(\rho) \otimes |o\>\<o|$.  For a finite
tree, $\sum_o {\cal I}_o$ is trace preserving.  For an infinite tree
we have to make an extra assumption that for every input the protocol
terminates with probability 1.

We can fine-grain an LOCC protocol by breaking up CP maps
associated with edges into Kraus-rank-1 CP maps, enlarging the tree,
and modifying the coarse-graining function $f$ accordingly, without
changing the instrument implemented by the protocol. We call such a
protocol \emph{fine-grained}.

\section{Main Result}
\label{sec:result}
\subsection{The first method and resulting bounds}

\begin{theorem}
\label{thm:main}
Suppose an ensemble of multipartite quantum states
$S=\{p_1\rho_1,\cdots,p_n\rho_n\}\subset\L(\otimes_{j=1}^m\mathcal{\X}_j)$
with $p_k \geq 0$, $\sum_k p_k=1$ can be distinguished by some LOCC
protocol $\mathcal{P}$ with success probability $t$.  Then there
exists an LOCC protocol $\mathcal{P}'$ achieving the same success
probability $t$ but in which each measurement requires at most $2
d_{\rm loc}^2 $ outcomes, where $d_{\rm loc}$ is the dimension of
the local system measured.  If $\mathcal{P}$ has finitely many rounds
of communication $\ell$, the total width can be bounded by $2^\ell
d^{2\ell}$, where $d$ is the maximum local dimension.
\end{theorem}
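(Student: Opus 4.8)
The plan is to perform the reduction one measurement at a time, working top-down from the root of the tree representing $\mathcal{P}$, so that the surgery at a vertex $v$ never depends on the subtree hanging below $v$. Fix attention on the measurement at the root, acting on a local system $\mathcal{X}_1$ of dimension $d_{\rm loc}$, with (possibly infinitely many) Kraus operators $A_i{:}\,\mathcal{X}_1 \to \mathcal{Y}_i$ and effects $E_i = A_i^\dagger A_i$ satisfying $\sum_i E_i = I_{\mathcal{X}_1}$. First I would normalize the outputs: by the polar decomposition $A_i = U_i \sqrt{E_i}$ with $U_i$ a local partial isometry, replacing the Kraus operator by $\sqrt{E_i}$ (output in $\mathcal{X}_1$) and prepending $U_i$ to the conditional subprotocol $\mathcal{Q}_i$ leaves the implemented LOCC instrument unchanged. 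After this step every effect is a positive operator on the single $d_{\rm loc}$-dimensional space $\mathcal{X}_1$.

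The key step is a linearization that exposes a finite-dimensional use of Caratheodory. For each outcome $i$, let $c_i$ denote the contribution of branch $i$ to the overall success of $\mathcal{P}$, that is, the success probability of the fixed subprotocol $\mathcal{Q}_i$ applied to the unnormalized post-measurement ensemble $\{p_k (\sqrt{E_i}\otimes I)\rho_k(\sqrt{E_i}\otimes I)\}_k$, so that $t = \sum_i c_i$. Since any LOCC protocol acts linearly on unnormalized states, rescaling $E_i \mapsto \lambda E_i$ with $\lambda \ge 0$ rescales the post-measurement ensemble, and hence $c_i$, by the same $\lambda$; thus the achievable pairs form a cone. I now associate to each outcome the vector $x_i = (E_i, c_i)$ in the real vector space $\mathrm{Herm}(\mathcal{X}_1)\oplus\mathbb{R}$, of dimension $d_{\rm loc}^2 + 1 \le 2 d_{\rm loc}^2$. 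The POVM and success identities read $\sum_i x_i = (I_{\mathcal{X}_1}, t)$, exhibiting $(I,t)$ as a conical combination of the $x_i$, and Caratheodory's theorem for cones retains at most $d_{\rm loc}^2+1$ outcomes $i_1,i_2,\ldots$ with nonnegative weights $\lambda_j$ such that $\sum_j \lambda_j E_{i_j} = I_{\mathcal{X}_1}$ and $\sum_j \lambda_j c_{i_j} = t$. Defining the new root POVM $\{E'_j = \lambda_j E_{i_j}\}$ (valid, as it sums to $I$) with Kraus operators $\sqrt{E'_j}$ and subprotocol $\mathcal{Q}_{i_j}$ (with $U_{i_j}$ prepended), scale-covariance guarantees that branch $j$ contributes exactly $\lambda_j c_{i_j}$, so the total success is still $t$ while the root now has at most $2 d_{\rm loc}^2$ outcomes. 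Recursing top-down over the whole tree yields $\mathcal{P}'$; note this even beats the stated bound, which follows a fortiori.

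The main obstacle is making this rigorous when $\mathcal{P}$ is infinitely wide and infinitely deep. For a measurement with a continuum of outcomes the sum $\sum_i x_i$ becomes an integral $\int x_i \, d\mu(i) = (I,t)$, and I would invoke the finite-dimensional moment (Caratheodory-for-measures) theorem to replace $\mu$ by a finitely supported measure with at most $d_{\rm loc}^2+1$ atoms reproducing the same vector, which needs only that the $x_i$ lie in the fixed space $\mathrm{Herm}(\mathcal{X}_1)\oplus\mathbb{R}$; scale-covariance again certifies that these atoms assemble into a genuine POVM with matching success. For infinite depth, since the success of $\mathcal{P}$ is defined as the limit $\lim_\ell P(\mathcal{P}_\ell, S)$ of its prefixes, I would check that the top-down surgery is consistent — the replacement at $v$ uses only the $c_i$, which are themselves limits over the subtree below $v$ — and that it commutes with taking prefixes, so that $P(\mathcal{P}'_\ell,S) \to t$. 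Finally, for a finite protocol of depth $\ell$ with maximal local dimension $d$, every vertex branches into at most $2 d^2$ children, so the number of leaves, and hence the total width, is at most $(2 d^2)^\ell = 2^\ell d^{2\ell}$.
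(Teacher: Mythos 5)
Your proposal is correct, but it proves the theorem by a genuinely different (and in some respects stronger) route than the paper. The paper compresses each measurement in stages: it uses a matrix-completion lemma (Lemma~\ref{lem:matrix-sum}) to realize the measurement in two steps, then iteratively merges pairs of outcomes until every first-stage branch has conditional success probability exactly $t$ (an equalization whose contradiction argument implicitly assumes $\mathcal{P}$ is \emph{optimal}, and whose termination when $\mathcal{M}$ has infinitely many outcomes is not really addressed), and only then applies convex Carath\'eodory (Lemma~\ref{lem:ct}) to the normalized effects $B_i^\dagger B_i/\tr B_i^\dagger B_i$ with barycentre $I/d_1$; the second-stage binary measurements are what cost the factor $2$ in $2d_{\rm loc}^2$. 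You instead augment each effect with its branch's contribution to success, $x_i=(E_i,c_i)\in\mathrm{Herm}(\X_1)\oplus\mathbb{R}$, and apply conic Carath\'eodory once to $(I,t)=\sum_i x_i$. Your observation that the branch contribution is homogeneous of degree one in the effect is exactly the right one — note $c_i$ is \emph{not} linear in $E_i$ since the subprotocols differ per branch, which is precisely why it must be carried as a separate coordinate — and it makes the reweighted POVM $\{\lambda_j E_{i_j}\}$ reproduce $t$ exactly. This buys you: a sharper per-measurement bound $d_{\rm loc}^2+1\le 2d_{\rm loc}^2$, nearly matching the paper's second method (Theorem~\ref{thm:conv-slim} plus Lemma~\ref{POVM} give $d_{\rm loc}^2$, but via a convex decomposition that in general needs shared randomness to preserve the instrument); no equalization process; no Lemma~\ref{lem:matrix-sum}; and no optimality assumption on $\mathcal{P}$, so your argument preserves $t$ for an arbitrary, not necessarily optimal, protocol. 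Two technical points deserve a sentence in a final write-up, though both sit at the same level of rigor as the paper's own treatment: (i) for countably many outcomes you need that the infinite conic combination $(I,t)=\sum_i x_i$ lies in the finitely generated conic hull of $\{x_i\}$ — this follows, after normalizing (the $x_i/\tr E_i$ are bounded since $c_i\le \tr E_i$), from the fact that the barycentre of a probability measure on $\mathbb{R}^n$ lies in the convex hull of its support, with your Richter/Tchakaloff-type appeal covering continuous POVMs; and (ii) the identity $t=\sum_i c_i$ for infinite-depth protocols requires interchanging the limit over prefixes with the sum over branches, justified by dominated convergence since the prefix contributions are bounded by the branch probabilities $q_i$ with $\sum_i q_i=1$. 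Your deferral of the polar isometries $U_i$ and the resulting control of local dimensions along the recursion mirrors the observation the paper makes in the proof of Theorem~\ref{thm:conv-slim}, and your leaf count $(2d^2)^\ell=2^\ell d^{2\ell}$ matches the stated total width.
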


We first discuss informally the intuition behind the 
constructive proof.  We obtain the bound by recursively ``compress''
an arbitrary measurement in the protocol while preserving the success
probability, depth of the protocol, and the induced post-measurement
ensembles.
The compression for a measurement is done in several steps:
\begin{enumerate}
\item Show that the measurement can be performed in two stages (as
a composition of two measurements).
\item Show that the first stage measurement can be modified
to ``equalize'' the success probability on the induced
post-measurement ensemble for each measurement outcome.  This step
preserves the success probability of the protocol.
\item A convexity argument shows that all but a finite number of measurement
outcomes can be dropped for the first stage while preserving the
probability of success.
\item The modifications in steps 2-3 are compatible with the two stage
implementation of the original measurement.  So, the second stage
measurement is applied for each of the finitely many outcomes in stage 1.  
This preserves the depth and success probability of the protocol.
Furthermore, all the subsequent steps in the original protocol are
unaffected.
\end{enumerate}
The following lemma will be needed for steps 1, 2, and 4 above.
\begin{lemma}
\label{lem:matrix-sum}
For any pair of matrices $X,Y$ of the same width, there exists a matrix
$C$ of the same size as $X$ and a matrix $D$ of the same size as $Y$ such
that
\begin{eqnarray*}
C\sqrt{X^{\dag}X+Y^{\dag}Y}=X,\\
D\sqrt{X^{\dag}X+Y^{\dag}Y}=Y,\\
C^{\dag}C+D^{\dag}D=I.
\end{eqnarray*}
\end{lemma}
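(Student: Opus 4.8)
The plan is to read the three desired identities as a single statement about a polar decomposition. Since $X$ and $Y$ have the same width, view them as operators on a common input space $\mathcal{X}$, say $X\in\L(\mathcal{X},\mathcal{Y}_1)$ and $Y\in\L(\mathcal{X},\mathcal{Y}_2)$, and stack them into
$$Z=\begin{pmatrix}X\\Y\end{pmatrix}\in\L(\mathcal{X},\mathcal{Y}_1\oplus\mathcal{Y}_2).$$
Then $Z^{\dag}Z=X^{\dag}X+Y^{\dag}Y$, so with $M:=\sqrt{X^{\dag}X+Y^{\dag}Y}$ we have $M=|Z|$. Writing the unknowns as $W=\begin{pmatrix}C\\D\end{pmatrix}$, the first two equations are exactly $Z=WM$, while the third is $W^{\dag}W=C^{\dag}C+D^{\dag}D=I$. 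Hence the whole lemma is equivalent to the single claim that the partial isometry in the polar decomposition $Z=W|Z|$ can be chosen to be a genuine isometry on all of $\mathcal{X}$. This reformulation is the conceptual core; the rest is making the polar factor explicit and completing it on $\ker M$.

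First I would record the one fact that makes $Z=WM$ solvable: $\ker M\subseteq\ker X\cap\ker Y$. Indeed, if $Mv=0$ then $0=\langle v|M^2v\rangle=\|Xv\|^2+\|Yv\|^2$, forcing $Xv=Yv=0$. With this in hand I would define $C$ and $D$ explicitly through the Moore--Penrose pseudoinverse, $C:=XM^{+}$ and $D:=YM^{+}$. Since $M$ is Hermitian and positive semidefinite, $M^{+}$ is Hermitian and $P:=M^{+}M=MM^{+}$ is the orthogonal projection onto $(\ker M)^{\perp}$. Then $CM=X(M^{+}M)=XP=X$, where $XP=X$ holds precisely because $X$ annihilates $\ker M$; likewise $DM=Y$. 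A direct computation gives
$$C^{\dag}C+D^{\dag}D=M^{+}(X^{\dag}X+Y^{\dag}Y)M^{+}=M^{+}M^{2}M^{+}=P.$$
So the first two equations already hold, and the third holds \emph{up to} the projection $P$.

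It remains to upgrade $P$ to $I$ without disturbing $CM=X$ and $DM=Y$. The constraints only pin $C,D$ down on $(\ker M)^{\perp}$, so I am free to modify them on $\ker M$, and any such modification $C_1,D_1$ with $C_1=C_1(I-P)$, $D_1=D_1(I-P)$ automatically satisfies $C_1M=D_1M=0$. Choosing an isometry $W_1=\begin{pmatrix}C_1\\D_1\end{pmatrix}$ from $\ker M$ into $(\operatorname{ran}Z)^{\perp}\subseteq\mathcal{Y}_1\oplus\mathcal{Y}_2$ and replacing $C,D$ by $C+C_1,D+D_1$, the cross terms vanish because the two ranges are orthogonal, so $W^{\dag}W=P+(I-P)=I$, as required. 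The main obstacle is exactly this completion step: it goes through only when there is enough room in the output, i.e. $\dim\mathcal{Y}_1+\dim\mathcal{Y}_2\ge\dim\mathcal{X}$ (equivalently, the total number of rows of $X$ and $Y$ is at least their common width); this is a genuinely necessary condition, since $C^{\dag}C+D^{\dag}D$ has rank at most $\dim\mathcal{Y}_1+\dim\mathcal{Y}_2$. In the intended applications $X$ and $Y$ are operators on a single local system ($\mathcal{Y}_1\cong\mathcal{Y}_2\cong\mathcal{X}$), so $\dim\mathcal{Y}_1+\dim\mathcal{Y}_2=2\dim\mathcal{X}$ and the condition is comfortably met; the clean statement is to either assume the output dimensions suffice or, equivalently, allow $C,D$ to map into a large enough output space.
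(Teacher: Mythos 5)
Your proof is correct, and in the nonsingular case it coincides exactly with the paper's construction: the paper takes $C=X(X^{\dag}X+Y^{\dag}Y)^{-1/2}$ and $D=Y(X^{\dag}X+Y^{\dag}Y)^{-1/2}$, which is precisely your $C=XM^{+}$, $D=YM^{+}$. Where you genuinely diverge is the singular case, and there your treatment is the more careful one --- in fact it repairs a literal flaw in the paper's recipe. The paper restricts $(X^{\dag}X+Y^{\dag}Y)^{-1/2}$ to the support and then \emph{adds to $C$ the projector onto the null space}; read literally this fails in general, because with $P_{0}$ denoting the kernel projector the cross terms $M^{+}X^{\dag}P_{0}+P_{0}XM^{+}$ survive unless $\operatorname{ran}X\perp\ker M$. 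For instance, with $X=|0\rangle\langle 1|$ and $Y=0$ on $\mathbb{C}^{2}$ one has $M^{+}=|1\rangle\langle 1|$, $P_{0}=|0\rangle\langle 0|$, so the paper's prescription gives $C=|0\rangle\langle 1|+|0\rangle\langle 0|$ and $C^{\dag}C+D^{\dag}D=(|0\rangle+|1\rangle)(\langle 0|+\langle 1|)\neq I$. Your completion --- a partial isometry carrying $\ker M$ into $(\operatorname{ran}Z)^{\perp}$, justified by the polar-decomposition picture $Z=W|Z|$ with $W$ an isometry --- is exactly the right fix; the same counting shows one could also complete within $\mathcal{Y}_{1}$ alone, via a partial isometry into $(\operatorname{ran}XM^{+})^{\perp}$, whenever $X$ is square, since $\dim\ker M\leq\dim\mathcal{Y}_{1}-\operatorname{rank}X$ there. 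Your dimension caveat is likewise a genuine observation rather than pedantry: as literally stated the lemma is false for, say, $X=Y=0$ of size $1\times 3$, since then $C^{\dag}C+D^{\dag}D$ has rank at most $2$, and your condition $\dim\mathcal{Y}_{1}+\dim\mathcal{Y}_{2}\geq\dim\mathcal{X}$ is both necessary and sufficient. Neither issue affects the paper's use of the lemma in Theorem~\ref{thm:main}: there $X=A_{1}$ and $Y=\sqrt{s}A_{k}$ act on a local system with ample output dimension, and the measurement $\mathcal{N}$ is only ever applied to states supported on $\operatorname{ran}B$, where your projection identity $C^{\dag}C+D^{\dag}D=M^{+}M$ already acts as the identity, so the completion step is cosmetic in the application --- but as a standalone statement, your version is the one that is actually true as proved.
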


\begin{proof-of}[of~Lemma \ref{lem:matrix-sum}]
If $X^{\dag}X+Y^{\dag}Y$ is nonsingular, then the choices
\begin{eqnarray*}
C=X(X^{\dag}X+Y^{\dag}Y)^{-1/2}, \label{eq:c} \\
D=Y(X^{\dag}X+Y^{\dag}Y)^{-1/2}. \label{eq:d}
\end{eqnarray*}
imply $C^{\dag}C+D^{\dag}D=I$.  Otherwise, replace
$(X^{\dag}X+Y^{\dag}Y)^{-1/2}$ by its restriction on the support of
$X^{\dag}X+Y^{\dag}Y$ in the above expressions of $C$ and $D$,
and add to the expression of $C$ a projector onto the null space of
$X^{\dag}X+Y^{\dag}Y$.
\end{proof-of}

\begin{proof-of}[of~Theorem \ref{thm:main}]
Without loss of generality, the LOCC protocol $\mathcal{P}$
has the following form.
In the first round, one of the parties (w.l.o.g., the first party) applies
a measurement $\mathcal{M}$ with POVM $(A_1^\dagger A_1,A_2^\dagger
A_2,\cdots)$, possibly with infinitely many outcomes. Then, the party
broadcasts the measurement outcome.  In the second round, another
party applies another local measurement that can depend on the first
outcome, and broadcasts the second outcome.  This goes on, either for
some finitely many rounds, say, $\ell$, or indefinitely.

We construct $\mathcal{P}'$ from $\mathcal{P}$ as follows.

For simplicity, we focus on the compression method on the first
measurement $\mathcal{M}$.  Every possible measurement outcome $i$
induces a post-measurement ensemble
$S_i=\{p_1A_i\rho_1A_i^{\dag},\cdots,p_nA_i\rho_nA_i^{\dag}\}$.  
Denote the probability of the ensemble $S_i$ by $q_i$.  We focus on
the set of $i$'s for which $q_i > 0$.  Conditioned on the outcome $i$,
$S_i/q_i$ is a normalized ensemble, with some probability of
successful discrimination $t_i$ (see Section \ref{primer}).  The
$t_i$'s are related to the total success probability by
\begin{equation}
t = \sum_i q_i t_i \,.
\end{equation}

We first show that $\mathcal{M}$ can be performed in two stages.  and
that the first stage can be modified to some $\mathcal{M}'$ to
equalize the success probability for each outcome.
Assume without loss of generality,
\begin{equation}
  t_1\geq t_2\geq t_3 \geq \cdots
\end{equation}
If $t_i = t$ for all $i$, we are done.  So, suppose there exists some
$k$ such that $t > t_k$, which also implies $t_1 > t$.
There exists $0 < s$ such that
\begin{equation}\label{solution}
\frac{q_1t_1+sq_kt_k}{q_1+sq_k}=t \,.
\end{equation}
To see this, note that $t = (1-\lambda) t_1 + \lambda t_k$ for some
$\lambda \in (0,1)$.  Then, it suffices for $\frac{s q_k}{q_1 + sq_k}
= \lambda$, which holds if
\begin{equation}\label{s}
s = \frac{\lambda q_1}{(1-\lambda) q_k}\,.
\end{equation}
We now consider the two cases $s \leq 1$ and $s > 1$ separately.

If $s \leq 1$, let $B=\sqrt{A_1^{\dag}A_1+sA_k^{\dag}A_k}$.  
Consider the induced ensemble
$BS=\{p_1B\rho_1B^{\dag},\cdots,p_nB\rho_nB^{\dag}\}$.  
The probability of the ensemble $BS$ is equal to $q_1+sq_k$.  
We now show that $BS/(q_1+sq_k)$ has success probability
$(q_1t_1+sq_kt_k)/(q_1+sq_k)$, which equals to $t$.
To see this:
\vspace{-2ex}
\begin{quote}
Consider a modification to $\mathcal{M}$ by replacing
$A_1$ and $A_k$ by $B$ and $\sqrt{1-s} A_k$ respectively.  Call the
resulting measurement $\tilde{\mathcal{M}}$.
If $BS/(q_1+sq_k)$ has probability of success
greater than $t$, replacing $\mathcal{M}$ by $\tilde{\mathcal{M}}$ in
$\mathcal{P}$ outperforms $\mathcal{P}$, contradicting its optimality.
Conversely, consider the application to the ensemble $BS$ a binary
measurement $\mathcal{N}$ defined by the POVM $(C^\dagger C,D^\dagger D)$
where $C,D$ are obtained as in Lemma \ref{lem:matrix-sum} with $X=A_1$
and $Y=\sqrt{s} A_k$.  The lemma guarantees that $\mathcal{N}$ is a
valid measurement on the postmeasurement space of $\tilde{\mathcal{M}}$. 
Using Lemma \ref{lem:matrix-sum} to 
combine the effects due to $\tilde{\mathcal{M}}$ and $\mathcal{N}$, 
one can see that 
the outcome of $\mathcal{N}$ corresponding to $C^\dagger C$ induces the
postmeasurement ensemble $S_1$ while the outcome corresponding to
$D^\dagger D$ induces the postmeasurement ensemble $s S_k$.
Therefore, $BS/(q_1+sq_k)$ has success probability at least $(q_1 t_1
+ s q_k t_k)/(q_1+sq_k)$ which is equal to $t$ (see (\ref{solution})).
\end{quote}
\vspace{-2ex}
Observe that modifying $\mathcal{M}$ to $\tilde{\mathcal{M}}$ replaces
$t_1$ by $t$, $q_1$ by $q_1 + s q_k$, $q_k$ by $(1-s) q_k$, while
$t_k$ is left unchanged.  Also, applying $\mathcal{N}$ after
$\tilde{\mathcal{M}}$ gives the original $\mathcal{M}$.

If $s>1$, (\ref{s}) can be rewritten as $\frac{1}{s} =
\frac{(1-\lambda) q_k}{\lambda q_1}$.  A similar argument holds (and
we do not repeat it here), with
$A_1$ and $A_k$ interchanged.  In this case, we replace $A_k$ by
$B'=\sqrt{\frac{1}{s}A_1^{\dag}A_1+A_k^{\dag}A_k}$ and $A_1$ by
$\sqrt{1-\frac{1}{s}} A_1$ to obtain $\tilde{\mathcal{M}}$, and
$t_k$ is replaced by $t$.

In either case, modifying $\mathcal{M}$ into $\tilde{\mathcal{M}}$
\emph{strictly} increases the probability to have an induced
postmeasurement ensemble that has probability of success equal to $t$.
We repeat this modification until all postmeasurement ensembles have
probability of success equal to $t$ (a property we need later when we
reduce the number of outcomes).
The resulting measurement is the desired first stage measurement
$\mathcal{M}'$, say, with POVM $(B_1^\dagger B_1, B_2^\dagger B_2,
\cdots)$.
Also, from the above discussion, for each outcome of
$\mathcal{M}'$, there is a subsequent second stage binary measurement
that completes $\mathcal{M}$.

In the next step, we replace $\mathcal{M}'$ by $\mathcal{M}''$ which has
only $d_1^2$ measurement outcomes, where $d_1$ is the dimension of
the system measured (and held by the first party).  For this we use
Carath\'{e}odory's Theorem (which has a constructive proof):

\begin{lemma}[Carath\'{e}odory's Theorem~\cite{Rockafellar-1996a}]
\label{lem:ct}
Let $H$ be a subset of $\mathbb{R}^n$ and $\conv(H)$ its convex hull.
Then any $x \in \conv(H)$ can be expressed as a convex combination of
at most $n+1$ elements of $H$.
\end{lemma}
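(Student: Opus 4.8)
The plan is to prove the statement by a minimality-plus-linear-dependence argument, the standard route to Carath\'{e}odory's Theorem. Starting from any $x \in \conv(H)$, I would write $x = \sum_{i=1}^{k} \lambda_i h_i$ with $h_i \in H$, $\lambda_i > 0$, and $\sum_{i=1}^{k} \lambda_i = 1$, choosing $k$ to be the \emph{smallest} number of points for which such a representation exists. The goal is then to show $k \le n+1$.

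First I would argue by contradiction, assuming $k \ge n+2$. Then the $k-1 \ge n+1$ difference vectors $h_2 - h_1, \ldots, h_k - h_1$ live in $\mathbb{R}^n$ and hence are linearly dependent. This yields scalars $\mu_1, \ldots, \mu_k$, not all zero, with $\sum_{i=1}^{k} \mu_i h_i = 0$ and $\sum_{i=1}^{k} \mu_i = 0$, the second relation obtained by setting $\mu_1 = -\sum_{i \ge 2} \mu_i$.

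Next, the key step is to use this dependence to eliminate one point. For any real $\alpha$, the two relations combine to give $x = \sum_{i=1}^{k} (\lambda_i - \alpha \mu_i) h_i$ with coefficients still summing to $1$. Since the $\mu_i$ sum to zero but are not all zero, at least one is strictly positive, so I can set $\alpha = \min_{i:\,\mu_i > 0} \lambda_i / \mu_i$. This choice keeps every coefficient nonnegative while forcing at least one of them to vanish, exhibiting $x$ as a convex combination of at most $k-1$ elements of $H$ and contradicting the minimality of $k$. Hence $k \le n+1$.

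The main obstacle, though a mild one, is the bookkeeping in the final step: checking that $\alpha$ is well defined and nonnegative, and that each coefficient $\lambda_i - \alpha \mu_i$ stays nonnegative for both signs of $\mu_i$ (automatic when $\mu_i \le 0$, and guaranteed by the choice of $\alpha$ when $\mu_i > 0$). Since the paper flags that the theorem has a constructive proof, I would stress that this argument is itself constructive: each reduction explicitly computes $\alpha$ and drops an index, so iterating from any initial representation produces an $(n+1)$-term convex combination in finitely many steps.
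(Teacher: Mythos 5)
Your proof is correct, and it is the standard minimality-plus-linear-dependence argument for Carath\'{e}odory's Theorem. Note that the paper itself does not prove this lemma at all --- it is stated with a citation to Rockafellar's \emph{Convex Analysis} --- so there is no in-paper argument to compare against; your write-up supplies exactly the classical proof that the citation points to. All the details check out: a minimal $k$ exists since representations use finitely many points; the dependence relation $\sum_i \mu_i h_i = 0$ with $\sum_i \mu_i = 0$ forces some $\mu_i > 0$ (the $\mu_i$ sum to zero and are not all zero), so $\alpha = \min_{i:\,\mu_i>0} \lambda_i/\mu_i$ is well defined and positive; and the perturbed coefficients $\lambda_i - \alpha\mu_i$ remain nonnegative, sum to $1$, and annihilate at least one index, contradicting minimality. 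Your closing observation is also apt: the paper explicitly relies on the theorem having a constructive proof (it is used to build the compressed measurements in Theorems \ref{thm:main} and \ref{thm:shared-randomness}), and your iterative reduction --- compute $\alpha$, drop an index, repeat --- delivers precisely that constructivity, terminating in at most $k-(n+1)$ steps from any initial $k$-term representation.
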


To rewrite the sum
$\sum_i B_i^{\dag} B_i = I$,
note that $\sum_i u_i \frac{B_i^{\dag} B_i}{\tr B_i^{\dag} B_i} =
\frac{I}{d_1}$, where $u_i = \frac{\tr B_i^{\dag}
  B_i}{d_1}$ form a distribution.  
So, 
we can apply Carath\'{e}odory's Theorem with 
$H = \{ \frac{B_i^{\dag} B_i}{\tr B_i^{\dag} B_i} \}_i$ which is a subset 
of all trace 1
$d_1 \times d_1$ hermitian matrices with $n=d_1^2-1$, and obtain $I$ as a sum
of at most $d_1^2$ operators, each is a positive multiple of some
$B_{i}^\dagger B_{i}$. 
This new sum defines a new first stage
measurement $\mathcal{M}''$, which is similar to of $\mathcal{M}'$,
but now only $d_1^2$ outcomes are possible. For each outcome of
$\mathcal{M}''$, the induced postmeasurement ensemble is the same as
in $\mathcal{M}'$ and has success probability $t$.

Finally, for each outcome of $\mathcal{M}''$, we apply the binary
measurement that brings the postmeasurement ensemble back to that of
$\mathcal{M}$.  The total number of outcomes is at most $2d_1^2$.
This completes the compression of the first measurement $\mathcal{M}$.

After the first round of communication, conditioned on each outcome,
the parties now hold a new, normalized, ensemble, and they try their
best to discriminate it (with $\ell-1$ rounds of communication if
$\mathcal{P}$ has $\ell$ rounds).  A similar compression can now be
applied to the next measurement.  Repeating the process, each
measurement in the protocol has no more than $2 d_{\rm loc}^2$
outcomes.  If $\mathcal{P}$ has $\ell$ rounds, the total number of
outcomes is at most $2^\ell d^{2\ell}$, where
$d=\max\{d_1,d_2,\cdots,d_m\}$ is the maximum local dimension.
\end{proof-of}

Note that without the constraint of being in an LOCC protocol, a
measurement on a $d$-dimensional system can be compressed to
$d^2$ outcomes.
This bound $2d^2$ shows that to optimize state discrimination in
LOCC, about twice as many outcomes (or one additional bit of 
communication) are sufficient.
This is independent on the number of rounds (and finite
or not), how deep the parties have executed the protocol, the number of
parties or the total dimension of the system, and not on the
number of states in $S$.
In comparison, in \cite{CLM+14} each measurement in round $\ell$ out of
$r$ has at most $n D^{4(r-\ell+1)}$ outcomes where $n$ is the number of
outcomes, after coarse-graining, at the end of the protocol (which 
is $|S|$ for state
discrimination) and $D$ is the global dimension.
Most importantly, this bound diverges when $r$ diverges.

If we apply Carath\'{e}odory's Theorem (Lemma \ref{lem:ct}) to the
original POVM $\{A_i^\dagger A_i\}$ to reduce the number of
measurement outcomes, the probability of discrimination need not be
preserved.  We introduce the first stage modification to equalize the
probability of correct discrimination for each outcome, and need to
add a second stage measurement, thereby getting an additional 
factor of $2$ in the bound $2d_1^2$.
The next method improves the bound to $d_1^2$.  It is based on the
limited number of outcomes for extremal measurement and the limited
size of the support of extremal distributions, both of which are
corollaries of Carath\'{e}odory's Theorem.

\subsection{The second method and improved bounds}
The second method implements any LOCC protocol with finite
width per measurement and shared randomness.

\begin{theorem}
\label{thm:conv-slim}
Let ${\cal P}$ be a fine-grained LOCC protocol (see the end of 
Section \ref{primer}) implementing an instrument ${\cal I}$.
Then, ${\cal P}$ can be written as a convex combination 
${\cal P} = \sum_i \lambda_i {\cal P}^{(i)}$, where:
\begin{enumerate}
  \item each ${\cal P}^{(i)}$ is an LOCC protocol implementing some 
        instrument ${\cal L}^{(i)}$;
  \item each ${\cal P}^{(i)}$ has the same tree structure as ${\cal P}$;
  \item each edge CP map ${\cal L}^{(i)}_e$ is proportional to the 
        corresponding edge CP map ${\cal L}_e$ of ${\cal P}$;
  \item ${\cal L}_e = \sum_i \lambda_i {\cal L}^{(i)}_e$; 
  \item for each $i$ and each vertex $v$ associated with a local operation 
        on a $d_{\rm loc}$-dim system, at most $d_{\rm loc}^2$ outgoing edges 
        of $v$ have nonzero edge CP maps.
\end{enumerate}
\end{theorem}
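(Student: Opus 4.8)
The plan is to decompose the protocol one vertex at a time inside a suitable convex polytope of admissible edge rescalings, and then to combine these per-vertex decompositions \emph{independently} across the tree, so that the edge-wise identity (item 4) and, in fact, the whole implemented instrument are reproduced. Since $\mathcal{P}$ is fine-grained, every edge CP map is Kraus-rank one, $\mathcal{L}_{(w,v)}(\rho) = K_{(w,v)}\rho K_{(w,v)}^\dagger$, and trace-preservation at a vertex $v$ on a $d_{\rm loc}$-dimensional local system reads $\sum_{w} E_{(w,v)} = I$ with $E_{(w,v)} := K_{(w,v)}^\dagger K_{(w,v)} \geq 0$. Rescaling the map on edge $(w,v)$ by a scalar $\mu_w \geq 0$ keeps it proportional to $\mathcal{L}_{(w,v)}$ (item 3), so the only freedom I need to control is this family of scalars at each vertex.

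First I would fix a vertex $v$ and study the set of admissible rescalings
\[
  \mathcal{S}_v = \Bigl\{ (\mu_w)_w : \mu_w \geq 0,\ \sum_w \mu_w E_{(w,v)} = I \Bigr\}.
\]
The all-ones vector lies in $\mathcal{S}_v$ and corresponds to $\mathcal{L}_v$ itself. Discarding the coordinates with $E_{(w,v)} = 0$, the inequality $\mu_w E_{(w,v)} \leq I$ bounds each surviving $\mu_w$, so $\mathcal{S}_v$ is a bounded polytope. The defining equation $\sum_w \mu_w E_{(w,v)} = I$ is a linear constraint in the real vector space of $d_{\rm loc}\times d_{\rm loc}$ Hermitian matrices, whose dimension is $d_{\rm loc}^2$; hence by the standard basic-feasible-solution bound, every extreme point of $\mathcal{S}_v$ has at most $d_{\rm loc}^2$ nonzero coordinates. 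This is precisely item 5, and it is where the improved width $d_{\rm loc}^2$ (instead of $2 d_{\rm loc}^2$) originates. Carath\'{e}odory's Theorem (Lemma \ref{lem:ct}) then expresses $\mathbf{1} = \sum_j \alpha^{(v)}_j \nu^{(v)}_j$ as a convex combination of finitely many such extreme points $\nu^{(v)}_j$.

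Next I would assemble the slim protocols from independent per-vertex choices. Index a slim protocol by a function $i$ that selects an extreme point $\nu^{(v)}_{i(v)}$ at each vertex $v$, set the rescaling $\mu^{(i)}_{(w,v)} = \nu^{(v)}_{i(v),w}$ and the weight $\lambda_i = \prod_v \alpha^{(v)}_{i(v)}$. Each $\mathcal{P}^{(i)}$ then shares the tree of $\mathcal{P}$ (item 2), has edge maps proportional to those of $\mathcal{P}$ (item 3), and has trace-preserving local instruments at every vertex, so it is a genuine LOCC protocol implementing some $\mathcal{L}^{(i)}$ (item 1), with at most $d_{\rm loc}^2$ nonzero outgoing edges per vertex (item 5). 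The edge-wise identity (item 4) follows by marginalizing the product weights: for a fixed edge $(w,v)$,
\[
  \sum_i \lambda_i \mu^{(i)}_{(w,v)} = \Bigl(\sum_{i(v)} \alpha^{(v)}_{i(v)} \nu^{(v)}_{i(v),w}\Bigr)\prod_{v'\neq v}\Bigl(\sum_{i(v')}\alpha^{(v')}_{i(v')}\Bigr) = 1\cdot 1 = 1,
\]
using $\mathbf{1} = \sum_j \alpha^{(v)}_j\nu^{(v)}_j$ at $v$ and $\sum_j \alpha^{(v')}_j = 1$ elsewhere. The same product/independence structure gives $\sum_i \lambda_i \mathcal{N}^{(i)}_u = \mathcal{N}_u$ for every vertex $u$: the rescaling of the cumulative map $\mathcal{N}_u$ is $\prod_{e}\mu^{(i)}_e$ over the edges of the root-to-$u$ path, which emanate from \emph{distinct} vertices, so the average factorizes into terms each equal to $1$. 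Summing over leaves $u\in f^{-1}(o)$ shows the implemented instrument is preserved, which is what underlies the shared-randomness application.

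The main obstacle I anticipate is the bookkeeping for infinite trees: the product $\prod_v \alpha^{(v)}_{i(v)}$ ranges over infinitely many vertices, so ``$\sum_i \lambda_i \mathcal{P}^{(i)}$'' must be read as an expectation over the product probability measure making the vertex choices independent, rather than a finite sum. I would make this rigorous through the truncation-and-limit framework already used for LOCC (the $\ell$-round prefixes $\mathcal{P}_\ell$ approximate $\mathcal{P}$), verifying that the factorization computation above is stable as $\ell \to \infty$ and that the termination-with-probability-one hypothesis descends to each $\mathcal{P}^{(i)}$ under the rescaling. The finite count $R$ on the number of distinct slim protocols claimed for the instrument-implementation version is a separate, genuinely global matter: it would require applying Carath\'{e}odory in the finite-dimensional space where the instrument itself lives, rather than the per-vertex decomposition described here.
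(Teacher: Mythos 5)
Your proof is correct, and at its mathematical core it coincides with the paper's: your polytope $\mathcal{S}_v$ of admissible rescalings is, after the substitution $p_w \propto \mu_w \tinyspace\tr E_{(w,v)}$, exactly the paper's set $P(H;v)$ of probability distributions with fixed barycentre $v = I/d_{\rm loc}$ from Corollary~\ref{ct2}, and your basic-feasible-solution perturbation argument for extreme points is the same Carath\'{e}odory-type argument by which the paper proves that corollary and deduces Lemma~\ref{POVM} (extremal POVMs have at most $d_{\rm loc}^2$ nonzero elements). The differences are in packaging, and two of them are genuine improvements. First, by rescaling edge CP maps by scalars you bypass the paper's polar-decomposition step ($A_i = U_i\sqrt{A_i^\dagger A_i}$, with the isometry delayed to the party's next turn): proportionality of CP maps (item 3) only constrains the operators $E_e$, so the reduction to canonical POVM form is unnecessary in your formulation. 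Second, and more substantively, your explicit product weights $\lambda_i = \prod_v \alpha^{(v)}_{i(v)}$ and the factorization computation make precise something the paper leaves implicit in ``the same reasoning can be applied to subsequent measurements'': edge-wise averaging (item 4) by itself does \emph{not} imply that the convex combination reproduces the cumulative maps $\mathcal{N}_u$ and hence the instrument $\mathcal{I}$, since products of edge maps do not commute with convex combinations; it is the independence of the per-vertex choices that yields $\sum_i \lambda_i \mathcal{N}^{(i)}_u = \mathcal{N}_u$, and this is precisely what Theorem~\ref{thm:shared-randomness} uses when it invokes $\mathcal{I} = \sum_i \lambda_i \mathcal{I}^{(i)}$. (Note that independent per-vertex choices are no loss of generality relative to the paper's recursion: the tree's vertices already index the full measurement history, so branch-dependent choices are subsumed.)

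One caveat, shared equally by the paper: when a single vertex has infinitely many outcomes, the rescaling vectors live in an infinite-dimensional coordinate space, so expressing the all-ones vector (equivalently, the original distribution in $P(H;v)$) as a convex combination of extreme points is not literally an application of Lemma~\ref{lem:ct}; the paper likewise asserts this decomposition after Corollary~\ref{ct2} without proof, and a truncation or Choquet-type argument is needed there. Your extreme-point support bound itself survives in that setting (linear dependence among any $d_{\rm loc}^2+1$ of the $E_{(w,v)}$ still permits the perturbation), your prefix-limit plan for infinite depth is a reasonable way to make the infinite product rigorous, and you correctly locate the finite bound $R$ as a separate, global Carath\'{e}odory application in the space of instruments, exactly as in the proof of Theorem~\ref{thm:shared-randomness}.
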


To prove the above, we first describe and prove a corollary of Carath\'{e}odory's Theorem.

\begin{cor}[Improved Carath\'{e}odory's Theorem]
\label{ct2}
Let $H = \{v_i\} \subset \mathbb{R}^n$, $v \in \mathbb{R}^n$.
Consider the set of probability distributions $p$ on $H$ 
with barycentre $v$, i.e.
\[
  P(H;v) := \left\{ p \text{ p.d. s.t. } v = \sum_i p_i v_i \right\}.
\]
This set is closed and convex. Its extreme points have 
support cardinality at most $n+1$.
\end{cor}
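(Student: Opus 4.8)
The plan is to treat $P(H;v)$ as the intersection of two elementary convex sets, then establish the support bound on extreme points by a normalization- and barycentre-preserving perturbation argument.

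First I would dispose of convexity and closedness, which are immediate. A probability distribution $p$ on $H$ is a nonnegative vector with $\sum_i p_i = 1$, so the collection of all such $p$ is the probability simplex, which is convex and closed. The barycentre condition $v = \sum_i p_i v_i$ is a system of $n$ affine equations in $p$, cutting out a closed affine subspace. Thus $P(H;v)$ is the intersection of a closed convex set with a closed affine set, hence closed and convex; concretely, for $p,p' \in P(H;v)$ and $\lambda \in [0,1]$, the combination $\lambda p + (1-\lambda)p'$ is again a probability distribution whose barycentre is $\lambda v + (1-\lambda)v = v$. (If $v \notin \conv(H)$ the set is empty and the extreme-point claim below holds vacuously.)

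The substance is the support bound, which I would prove by contradiction. Suppose $p$ is an extreme point of $P(H;v)$ whose support contains at least $n+2$ indices, and pick any $n+2$ of them, say $i_1,\dots,i_{n+2}$. Form the augmented vectors $(v_{i_j},1) \in \mathbb{R}^{n+1}$. Since these are $n+2$ vectors in an $(n+1)$-dimensional space, they are linearly dependent, so there are coefficients $c_1,\dots,c_{n+2}$, not all zero, with $\sum_j c_j v_{i_j} = 0$ and $\sum_j c_j = 0$. Extending $c$ by zero to the rest of $H$, consider $p^{\pm} = p \pm \varepsilon c$. Because $\sum_j c_j = 0$, each $p^{\pm}$ still sums to $1$; because $\sum_j c_j v_{i_j} = 0$, each $p^{\pm}$ still has barycentre $v$; and because $c$ is supported on indices where $p$ is strictly positive, for small enough $\varepsilon > 0$ the entries of $p^{\pm}$ remain nonnegative. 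Hence $p^{+},p^{-} \in P(H;v)$ and $p = \tfrac12(p^{+}+p^{-})$ with $p^{+}\neq p^{-}$, contradicting extremality. Therefore every extreme point has support of cardinality at most $n+1$.

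The one point requiring care — and the one I would flag as the main obstacle — is that $H$, and hence the support of $p$, may be infinite, so one cannot simply invoke finite-dimensional extreme-point theory wholesale. I would stress that the perturbation above is localized to the $n+2$ chosen coordinates and leaves every other coordinate of $p$ untouched, so both the nonnegativity check and the choice of $\varepsilon$ involve only finitely many entries; the argument therefore goes through verbatim irrespective of the cardinality of $H$. If an explicit admissible range for $\varepsilon$ is desired, one can normalize $c$ so that $\varepsilon < \min_j p_{i_j} / \max_j |c_j|$ suffices to keep all perturbed entries nonnegative.
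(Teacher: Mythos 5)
Your proof is correct, including the two points that actually need care: the perturbation vector exists because the $n+2$ augmented vectors $(v_{i_j},1)$ are forced to be linearly dependent in $\mathbb{R}^{n+1}$, and the perturbation touches only finitely many coordinates, so the argument is indifferent to whether $H$ (and the support of $p$) is infinite. However, your route differs from the paper's at the key step. The paper argues the contrapositive by invoking Carath\'{e}odory's Theorem (its Lemma~\ref{lem:ct}) as a black box: given $q \in P(H;v)$ with support $S$, $|S| \geq n+2$, it writes $v = \sum_{i \in S} r_i v_i$ with $r$ supported on at most $n+1$ points, uses $q_i > 0$ on $S$ to find $t > 0$ with $q - t\tinyspace r \geq 0$, and concludes $q = t\tinyspace r + (1-t)\tinyspace r'$ with $r' \in P(H;v)$ and $r \neq q$, so $q$ is not extreme. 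You instead inline the linear-algebra mechanism underlying Carath\'{e}odory's Theorem itself: the symmetric perturbation $p \pm \varepsilon c$ with $\sum_j c_j = 0$ and $\sum_j c_j v_{i_j} = 0$ exhibits $p$ as the midpoint of two distinct points of $P(H;v)$. Your version is self-contained and slightly cleaner as a non-extremality proof --- the midpoint representation is immediate, whereas the paper's split requires the small (omitted) check that $t$ can be taken in $(0,1)$ and that $r'$ is again a probability distribution with barycentre $v$. The paper's version buys brevity by reusing the already-stated lemma, and its decomposition has the mildly useful by-product that one of the two components already has support at most $n+1$. Both are valid; the difference is whether Carath\'{e}odory is quoted or reproved.
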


This corollary allows us to write any original probability
distribution with barycentre $v$, which by definition is an element of
$P(H;v)$, as a convex combination of such extremal distributions, each of
which has support at most $n+1$.

\begin{proof-of}[of Corollary~\ref{ct2}]
The convexity and closedness are clear. We prove the statement
concerning the support cardinality of the extremal points of $P(H;v)$ 
via its contrapositive.  
Consider any given $q \in P(H;v)$ with support $S$ of size 
$|S| \geq n+2$.
This gives an expression of $v = \sum_{i\in S} q_i v_i$ in 
which all $q_i > 0$.  
But this just says that $v$ is in the convex hull of 
$\{v_i: i\in S\}$, so, we can use Lemma \ref{lem:ct} to express 
$v$ as a convex combination of at most $n+1$ elements of $S$, 
$v = \sum_{i\in S} r_i v_i$ with some $r_i=0$. 
Consider the relation 
\[
  v = \sum_{i\in S} q_i v_i = \sum_{i\in S} r_i v_i.
\]
Since for all $i\in S$, $q_i > 0$, there exists a $t>0$
such that for all $i \in S$, $q_i - t r_i \geq 0$.  
This gives $q = t r + (1-t)r'$ for some other probability 
distribution $r'$, which by linearity is also an element of $P(H;v)$.  
But $q \neq r$ since $q$ has support strictly 
larger than that of $r$, therefore, $q$ is not an extreme point of $P(H;v)$.
Taking the contrapositive, extreme points of $P(H;v)$ have support 
cardinality at most $n+1$.  
\end{proof-of}

A special case of the above corollary upper bounds 
the number of outcomes in extremal measurements (by choosing the 
$v_i$'s to be density matrices and $v$ to be the maximally mixed 
state).  

\begin{lemma}[Corollary 2.48 in \cite{Wat18}]
\label{POVM}
For any extremal measurement on a Hilbert space $\X$, there are at
most $\dim(\X)^2$ nonzero POVM elements.
\qed
\end{lemma}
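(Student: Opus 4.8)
The plan is to deduce the lemma directly from Corollary \ref{ct2} via the barycentre construction hinted at in the surrounding text, arguing by contraposition: I will show that a measurement with strictly more than $\dim(\X)^2$ nonzero POVM elements cannot be extremal.

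First I would set $d = \dim(\X)$ and, given a measurement with nonzero POVM elements $E_1, \ldots, E_N$ satisfying $\sum_i E_i = I$, normalize each element to a density operator $\rho_i = E_i / \tr E_i$ and introduce the weights $p_i = (\tr E_i)/d$. A one-line computation gives $\sum_i p_i = \tr(\sum_i E_i)/d = 1$ and $\sum_i p_i \rho_i = \frac{1}{d}\sum_i E_i = I/d$, so $(p_i)$ is a probability distribution on $H = \{\rho_i\}$ whose barycentre is the maximally mixed state $v = I/d$; that is, $p \in P(H;v)$ in the notation of Corollary \ref{ct2}.

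Next I would apply Corollary \ref{ct2}. The density operators $\rho_i$ all lie in the real affine space of trace-one Hermitian operators on $\X$, whose affine hull has dimension $d^2 - 1$; working within this hull (rather than in the full $d^2$-dimensional space of Hermitian operators), Corollary \ref{ct2} shows that every extreme point of $P(H;v)$ has support of cardinality at most $d^2$. If $N > d^2$, then $p$ has support of size $N > d^2$ and is therefore \emph{not} an extreme point of $P(H;v)$, so exactly as in the proof of Corollary \ref{ct2} we may write $p = t\,r + (1-t)\,r'$ with $t \in (0,1)$ and distinct $r, r' \in P(H;v)$. Translating back, I would observe that $(d\,r_i\rho_i)_i$ and $(d\,r'_i\rho_i)_i$ are again valid measurements (each element is positive semidefinite, and each family sums to $d\cdot(I/d)=I$ because $r,r'\in P(H;v)$), while $E_i = d\,p_i\rho_i = t(d\,r_i\rho_i) + (1-t)(d\,r'_i\rho_i)$ for every $i$. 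Since $r \neq r'$ these two measurements differ, exhibiting $(E_i)$ as a nontrivial convex combination of two distinct measurements, hence non-extremal. The contrapositive is the claim.

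I expect the only genuinely delicate point to be the dimension bookkeeping in the third step: one must count the \emph{affine} dimension $d^2 - 1$ of the trace-one slice, since applying Corollary \ref{ct2} naively in the ambient $d^2$-dimensional Hermitian space would yield the weaker bound $d^2 + 1$. Everything else is a direct transcription of the scalar argument of Corollary \ref{ct2} into the language of POVMs, with the directions $\rho_i$ held fixed and only the weights varied; the verification that $(d\,r_i\rho_i)$ and $(d\,r'_i\rho_i)$ are bona fide measurements is immediate from $r, r' \in P(H;v)$.
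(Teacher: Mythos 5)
Your proof is correct and follows exactly the route the paper sketches: it presents Lemma \ref{POVM} as the special case of Corollary \ref{ct2} obtained by taking the $v_i$'s to be the normalized POVM elements (density matrices) and $v = I/\dim(\X)$ the maximally mixed state, which is precisely your construction. Your careful points — working in the $(d^2-1)$-dimensional affine space of trace-one Hermitian operators to get $d^2$ rather than $d^2+1$, and converting non-extremality of the weight distribution into a convex decomposition of the POVM itself via $E_i = t(d\,r_i\rho_i) + (1-t)(d\,r'_i\rho_i)$ — correctly fill in the details the paper leaves implicit (it simply cites Watrous, Corollary 2.48).
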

 
This lemma was used in \cite{EDavies78}. Other sources for it
include \cite{PAR99} and \cite[Corollary~1]{DPP05}.

\begin{proof-of}[of Theorem~\ref{thm:conv-slim}]
As before, it suffices to consider the first measurement on $\X_1$  
made by the first party.  For any such measurement,
$$\mathcal{M}(\rho) = \sum_i A_i \, \rho A_i^\dagger \otimes \op{i}{i},$$
we can always consider the canonical form,
$$\mathcal{M}(\rho) = \sum_i \sqrt{A_i^{\dag}A_i} 
\rho \sqrt{A_i^\dagger A_i} \otimes \op{i}{i},$$ 
because there exists isometry $U_i$ such that $A_i=
U_i\sqrt{A_i^{\dag}A_i}$ and so the two measurements differ only by a
conditional isometry, which can be delayed to the next action round of
this party.  Thus, we only need to consider the POVM of each local
measurement.  

We can decompose this POVM as a convex combination of POVMs of
extremal measurements.  By Lemma \ref{POVM} above, 
each extremal measurement has no more than $d_1^2$ outcomes.

The same reasoning can be applied to subsequent measurements.  For
each vertex, the maps associated with the outgoing edges may take the
state to spaces of different dimensions.  To perform the induction
through the tree, we need to make the additional observation that, for
a fine-grained protocol, the range of each edge map has dimension no
bigger than the input dimension.
\end{proof-of}

\begin{theorem}
\label{thm:mainb}
Suppose an ensemble of multipartite quantum states
$S=\{p_1\rho_1,\cdots,p_n\rho_n\}\subset\L(\otimes_{j=1}^m\mathcal{\X}_j)$
with $p_k \geq 0$, $\sum_k p_k=1$ can be distinguished by some LOCC
protocol $\mathcal{P}$ with success probability $t$. Then there exists
an LOCC protocol $\mathcal{P}'$ achieving the same success probability
$t$ but in which each measurement requires at most $d_{\rm loc}^2$
outcomes, where $d_{\rm loc}$ is the dimension of the local system
measured.  If $\mathcal{P}$ has $\ell<\infty$ many rounds of 
communication, the total width can be bounded by $d^{2\ell}$, where 
$d$ is the maximum local dimension.
\end{theorem}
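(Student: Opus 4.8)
The plan is to derive Theorem~\ref{thm:mainb} as a consequence of the convex decomposition in Theorem~\ref{thm:conv-slim}, using the convexity of the success probability. The key conceptual point is that the probability of successful discrimination $P(\mathcal{P},S)$ is an affine (hence convex) function of the protocol: if $\mathcal{P} = \sum_i \lambda_i \mathcal{P}^{(i)}$ is the convex decomposition guaranteed by Theorem~\ref{thm:conv-slim}, then running the mixture (by first drawing $i$ with probability $\lambda_i$ via shared randomness and then executing $\mathcal{P}^{(i)}$) achieves success probability $\sum_i \lambda_i P(\mathcal{P}^{(i)},S)$. Since the weighted average equals the total $t$, at least one term must satisfy $P(\mathcal{P}^{(i)},S) \geq t$; this lets us \emph{discard} the shared randomness and keep a single slim protocol.

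Concretely, I would proceed as follows. First, fine-grain $\mathcal{P}$ so that Theorem~\ref{thm:conv-slim} applies, noting that fine-graining does not change the implemented instrument and hence preserves the success probability $t$ (the optimal discrimination is read off from the coarse-grained final outcomes, which are unaffected). Second, apply Theorem~\ref{thm:conv-slim} to write $\mathcal{P} = \sum_i \lambda_i \mathcal{P}^{(i)}$, where each slim protocol $\mathcal{P}^{(i)}$ has at most $d_{\rm loc}^2$ nonzero outgoing edges at every vertex operating on a $d_{\rm loc}$-dimensional system. Third, establish the affine relation for the success probability. Here I would use property~(4) of Theorem~\ref{thm:conv-slim}, $\mathcal{L}_e = \sum_i \lambda_i \mathcal{L}^{(i)}_e$, which propagates multiplicatively along root-to-leaf paths to give $\mathcal{I}_o = \sum_i \lambda_i \mathcal{I}^{(i)}_o$ for each coarse-grained outcome $o$ (using the same coarse-graining function $f$ for every $\mathcal{P}^{(i)}$, which is legitimate since all share the tree structure of $\mathcal{P}$). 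Applying the same guessing/labelling rule that $\mathcal{P}$ uses then yields $P(\mathcal{P},S) = \sum_i \lambda_i P(\mathcal{P}^{(i)},S)$. Fourth, pick an index $i^\ast$ achieving the maximum, so $P(\mathcal{P}^{(i^\ast)},S) \geq t$; combined with optimality of $t$ this gives equality, and $\mathcal{P}' := \mathcal{P}^{(i^\ast)}$ is the desired protocol. The total-width bound for finite $\ell$ then follows by multiplying the per-vertex bound $d^2$ across the $\ell$ levels, giving at most $d^{2\ell}$ leaves.

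The main obstacle I anticipate is the bookkeeping in the infinite-round case, specifically justifying the affine identity $P(\mathcal{P},S) = \sum_i \lambda_i P(\mathcal{P}^{(i)},S)$ through the two nested limits in the definition $P(\mathcal{P},S) = \lim_{\ell\to\infty} P(\mathcal{P}_\ell,S)$. The identity holds exactly at each finite prefix $\mathcal{P}_\ell$, so the issue is interchanging the convex sum with the limit. This is harmless because the sum over $i$ may be taken finite at each depth (the tree is locally finite after slimming, and Carath\'eodory caps the decomposition width), and each $P(\mathcal{P}^{(i)}_\ell,S)$ is monotone nondecreasing and bounded in $[0,1]$, so the limit passes through a finite convex combination without difficulty. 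A secondary subtlety worth a sentence is confirming that the maximizing index $i^\ast$ can be chosen uniformly rather than depending on $\ell$; this follows since a finite convex combination of monotone bounded sequences attains its limiting average, so some single slim protocol in the decomposition already achieves $t$ in the limit.
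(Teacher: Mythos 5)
Your proposal is correct and follows essentially the same route as the paper's proof: the paper likewise applies Theorem~\ref{thm:conv-slim}, uses linearity of the success probability in the decomposition to conclude that the best slim protocol achieves success probability at least $t$ (and at most $t$ by optimality of $\mathcal{P}$), and takes that slim protocol as $\mathcal{P}'$, with the $d^{2\ell}$ total-width bound following from the per-vertex bound. Your extra care with the infinite-round limit interchange and the uniform choice of $i^\ast$ fills in details the paper leaves implicit, but it is the same argument.
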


\begin{proof}
Because the probability of success is linear in the decomposition in
Theorem \ref{thm:conv-slim}, the best slim protocol has probability of
success at least $t$ (and at most $t$ by the optimality of the
original protocol).  So, we can replace the original protocol by this
slim protocol, in which each measurement on a system with local
dimension $d_{\rm loc}$ has no more than $d_{\rm loc}^2$ outcomes.
\end{proof}

From the above proof, it is evident that 
Theorem \ref{thm:mainb} applies to the maximization of any
function that is linear, or more generally convex, in the LOCC instrument.

To implement ${\cal P}$ via slim protocols as given by the 
decomposition in Theorem \ref{thm:conv-slim},
the parties need to share randomness.  The next
theorem bounds the required amount of shared randomness when 
the implemented instrument has finite input and output dimensions
and finitely many classical outcomes.

\begin{theorem}
\label{thm:shared-randomness}
Let ${\cal P}$ be a fine-grained LOCC protocol (see the end of 
Section \ref{primer}) implementing an instrument ${\cal I}$ with $n$
coarse-grained outcomes.  
Let $D_0$ be the
total input dimension, and $D_i$ be the total output dimension of 
the CP map conditioned
on the outcome
being $i$ for $i = 1,\cdots,n$.
Then, ${\cal I} = \sum_{i=1}^R \mu_i {\cal I}^{(i)}$, with each 
${\cal I}^{(i)}$ an instrument implemented by a slim protocol 
${\cal P}^{(i)}$ satisfying all the conditions in Theorem \ref{thm:conv-slim},
and $R \leq \sum_{i=1}^n D_0^2 D_i^2 - D_0^2 + 1$.
\end{theorem}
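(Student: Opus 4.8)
The plan is to start from the convex decomposition already furnished by Theorem~\ref{thm:conv-slim} and then prune the number of terms by a dimension count followed by Carath\'{e}odory's Theorem. Theorem~\ref{thm:conv-slim} writes ${\cal I}$ as a convex combination of instruments ${\cal L}^{(i)}$, each implemented by a slim protocol; the sole remaining task is to bound how many such terms are required, and the bound $R$ should emerge as one plus the dimension of the affine space in which all these instruments sit.

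First I would fix a concrete coordinatisation of instruments via Choi matrices. For each coarse-grained outcome $i \in \{1,\ldots,n\}$, the CP map from the $D_0$-dimensional input to the $D_i$-dimensional output is represented by a Hermitian Choi operator on $\mathbb{C}^{D_0}\otimes\mathbb{C}^{D_i}$, which lives in a real vector space of dimension $D_0^2 D_i^2$. Stacking these over all $n$ outcomes identifies every instrument with the prescribed input and output dimensions as a point in a real vector space of dimension $\sum_{i=1}^n D_0^2 D_i^2$.

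Next I would impose the trace-preserving constraint to cut down this ambient dimension. Since $\sum_i {\cal I}_i$ is trace-preserving, the partial trace over the output of the summed Choi matrices must equal $I_{D_0}$; as the partial trace surjects onto the $D_0^2$-dimensional space of Hermitian $D_0\times D_0$ matrices, this is exactly $D_0^2$ independent affine constraints. Hence all instruments sharing these dimensions, and in particular ${\cal I}$ together with every ${\cal L}^{(i)}$, lie in one common affine subspace $V$ of dimension $\sum_{i=1}^n D_0^2 D_i^2 - D_0^2$.

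Finally I would invoke Carath\'{e}odory's Theorem (Lemma~\ref{lem:ct}) applied to the finite set $\{{\cal L}^{(i)}\}$ of slim-protocol instruments furnished by Theorem~\ref{thm:conv-slim}. Choosing coordinates identifies $V$ with $\mathbb{R}^{\,\sum_i D_0^2 D_i^2 - D_0^2}$, and since ${\cal I}$ lies in the convex hull of these instruments, Lemma~\ref{lem:ct} selects at most $\bigl(\sum_{i=1}^n D_0^2 D_i^2 - D_0^2\bigr)+1 = R$ of them; each selected term is among the original ${\cal L}^{(i)}$ and hence still satisfies the conditions of Theorem~\ref{thm:conv-slim}. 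The step I expect to require the most care is verifying that the trace-preserving constraint is exactly $D_0^2$-dimensional and independent, so that the count is tight; this reduces to the surjectivity of the partial-trace map, which I would confirm holds regardless of the individual output dimensions $D_i$.
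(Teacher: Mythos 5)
Your proposal matches the paper's proof essentially verbatim: take the convex decomposition from Theorem~\ref{thm:conv-slim}, coordinatise instruments by Hermitian Choi matrices in the affine space of dimension $\sum_{i=1}^n D_0^2 D_i^2 - D_0^2$, and prune to $R$ terms by Carath\'{e}odory (Lemma~\ref{lem:ct}); your explicit check that the trace-preserving constraint removes exactly $D_0^2$ independent degrees of freedom (surjectivity of the partial trace) only spells out what the paper leaves implicit. The sole caveat, which the paper shares, is that the decomposition from Theorem~\ref{thm:conv-slim} need not be a \emph{finite} set as you assert, but this is harmless since in finite dimensions the barycentre of a probability distribution on a bounded set still lies in the convex hull, so Lemma~\ref{lem:ct} applies unchanged.
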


\begin{proof}
From Theorem \ref{thm:conv-slim}, 
${\cal I} = \sum_i \lambda_i {\cal I}^{(i)}$ where the $\lambda_i$'s 
form a probability distribution. The affine
space of instruments with $n$ coarse-grained outcomes and with the
stated input and output dimensions has dimension
$\sum_{i=1}^n D_0^2 D_i^2 - D_0^2$ (since the CP map corresponding 
to the $i$-th outcome is represented by a hermitian Choi matrix 
specified by $D_0^2 D_i^2$ real parameters, and the trace-preserving 
constraint removes $D_0^2$ real degrees of freedom.  
Applying Carath\'{e}odory's Theorem (Lemma \ref{lem:ct}), 
we can rewrite ${\cal I} = \sum_i \mu_i {\cal I}^{(i)}$
where at most $R=\sum_{i=1}^n D_0^2 D_i^2 - D_0^2 +1$ of
the $\mu_i$'s are nonzero.
So, $\log_2 R$ shared bits of randomness are sufficient.
\end{proof}

We note that for an LOCC protocol ${\cal P}$ represented by an
infinite tree, Theorem \ref{thm:shared-randomness} provides an exact
implementation of the corresponding LOCC instrument ${\cal I}$ as a
finite mixture of slim LOCC protocols, each of which can be
represented by a potentially infinite tree and each defines a bona
fide instrument, with probability $1$.  The $\ell$-round prefix of
this compressed infinite protocol converges to ${\cal P}$.

\section{Acknowledgements}

We thank Scott Cohen, Anurag Anshu, Eric Chitambar, Laura Man\v{c}inska,
Dave Touchette, and John Watrous for helpful discussions.  DL was
supported by NSERC, CIFAR; AW was supported by the Spanish MINECO
(project FIS2016-86681-P) with the support of FEDER funds, and the
Generalitat de Catalunya (project 2017-SGR-1127); NY was supported by
DE180100156.

\newcommand{\etalchar}[1]{$^{#1}$}

\end{document}